\providecommand{\U}[1]{\protect\rule{.1in}{.1in}}
\newtheorem{theorem}{Theorem}
\newtheorem{definition}{Definition}
\newtheorem{example}{Example}
\newtheorem{proposition}{Proposition}
\newtheorem{remark}{Remark}
\begin{document}

\title{Multi-User Privacy: The Gray-Wyner System and Generalized Common Information}
\pubid{~}
\specialpapernotice{~}

%

\author{\authorblockN{Ravi Tandon, Lalitha Sankar, H. Vincent Poor}
\authorblockA{Dept. of Electrical Engineering,\\
Princeton University,
Princeton, NJ 08544.\\
Email: \{rtandon,lalitha,poor\}@princeton.edu\\}}%
%

\maketitle
%

\begin{abstract}%

\footnotetext{The research was supported by the Air Force Office of Scientific
Research MURI Grant FA-$9550$-$09$-$1$-$0643$, by the National Science
Foundation Grants CNS-$09$-$05398$ and CCF-$10$-$16671,$ and by a Fellowship
from the Council on Science and Technology at Princeton University.}The
problem of preserving privacy when a multi-variate source is required to be
revealed partially to multiple users is modeled as a Gray-Wyner source coding
problem with $K$ correlated sources at the encoder and $K$ decoders in which
the $k^{th}$ decoder, $k=1,2,...,K,$ losslessly reconstructs the $k^{th}$
source via a common link of rate $R_{0}$ and a private link of rate $R_{k}$.
The privacy requirement of keeping each decoder oblivious of all sources other
than the one intended for it is introduced via an equivocation constraint
$E_{k}$ at decoder $k$ such that the total equivocation summed over all
decoders $E\geq\Delta$. The set of achievable $(\{R_{k}\}_{k=1}^{K}%
,R_{0},\Delta)$ rates-equivocation $(K+2)$-tuples is completely characterized.
Using this characterization, two different definitions of common information
are presented and are shown to be equivalent. %

\end{abstract}%

\section{Introduction}

Information sources often need to be made accessible to multiple legitimate
users simultaneously. However, not all data from the source should be
accessible to all users. For example, a computer retailer may need to share
the annual revenue of all computers sold with all the vendors but share
vendor-specific sale information only with a particular vendor. Similarly, a
business consulting firm may share general data about a specific market with
all clients associated with that market but share client-specific strategies
with only that client. In both cases, one can view sharing the public (shared
by all) information via a common link and the private information via a
dedicated link. Maximizing the rate over the common link allows the
information source (retailer/consulting firm) to share the most allowed
publicly with all clients; however, the privacy guarantee requires that no
client has access to private data of the other clients. This paper develops an
abstract model and a methodology to study this problem.

\bigskip

We model the problem of revealing partial source information to multiple users
while keeping the data specific to each user private from other users as a
Gray-Wyner source coding problem with $K$ correlated sources at the encoder
and $K$ decoders in which the $k^{th}$ decoder, $k=1,2,...,K,$ losslessly
reconstructs the $k^{th}$ source via a common link of rate $R_{0}$ and a
private link of rate $R_{k}$. We model the privacy requirement of keeping each
decoder oblivious of all sources other than the one intended for it via an
equivocation constraint $E_{k}$ at decoder $k$ such that the total
equivocation summed over all decoders $E\geq\Delta$.

\bigskip

Since privacy is an important aspect of this problem, it is natural to
understand the maximal total equivocation that is achievable if the rate on
the common link is set to the maximum achievable. On the other hand, imposing
the constraint of maximal total equivocation may lead to perhaps a different
limit on the maximal rate on the common link. In this paper, we show that both
requirements, which are formally different definitions, yield the same
formulation for the maximal rate on the common link. In keeping with the
literature, this common rate is defined as the \textit{common information}.

\bigskip

The common information of two correlated random variables has been defined
independently by Wyner \cite{WynerCI} and G\'{a}cs-K\"{o}rner
\cite{GacsKorner}. Wyner's definition of common information as applied to the
two-user Gray-Wyner system (without privacy constraints) is the minimum rate
on the common link such that the total information shared across all three
links (one common and two private) does not exceed the source entropy. On the
other hand, the G\'{a}cs-K\"{o}rner common information is the maximal entropy
of a random variable that two non-interacting terminals can agree upon when
one terminal has access to $X^{n}$ and the other to $Y^{n}$ where $X$ and $Y$
are correlated random variables. For two correlated variables $X$ and $Y$, the
Wyner common information $C_{W}$, the G\'{a}cs-K\"{o}rner common information
$C_{GK}$, and the mutual information of the two variables are related as
$C_{GK}\leq I(X;Y)\leq C_{W}$. Recently, the authors in \cite{ChenCI} have
generalized Wyner's definition of common information to $K$ variables,
henceforth referred to as $B\left(  X_{1},X_{2},\ldots,X_{K}\right)  $ for $K$
correlated variables. While the definition naturally generalizes the two
variable common information, the resulting common information does not satisfy
a non-increasing property with $K$ as expected.

\bigskip

In this paper, we present two different definitions of common information: the
first is the maximal rate on the common link for which the total equivocation
is maximized, and the second is the maximal rate on the common link such that
each user losslessly reconstructs its intended source at its entropy. We show
that both definitions lead to the same formulation for common information
$C\left(  X_{1},X_{2},\ldots,X_{K}\right)  $. We present many properties of
$C\left(  X_{1},X_{2},\ldots,X_{K}\right)  $ and specifically show that
$C\left(  X_{1},X_{2},\ldots,X_{K}\right)  \leq B\left(  X_{1},X_{2}%
,\ldots,X_{K}\right)  $. To the best of our knowledge this is the first
generalization of common information that preserves the non-increasing
property and one whose form can be viewed as a natural generalization of the
G\'{a}cs-K\"{o}rner common information to $K$ variables.

\bigskip

The paper is organized as follows. In Section \ref{sec:system}, we present the
system model. In Section \ref{sec:results}, we present the rate-equivocation
region, develop a formulation for common information in two different ways,
and present key properties. In Section \ref{sec:compare}, we compare our
formulation with the $K$-variable generalization of Wyner's common information
in \cite{ChenCI} and illustrate with examples. We conclude in Section
\ref{sec:conclusion}.

\section{\label{sec:system}System Model}

We consider the following source network. A centralized encoder observes $K$
discrete, memoryless correlated sources, $\{X_{k}^{n}\}_{k=1}^{K}$ and is
interested in communicating source $X_{k}$ to decoder $k$ in a lossless
manner. The resources available at the encoder comprise two types of noiseless
rate-limited links. There are $K$ links of finite rate from the encoder to
each of the $K$ decoders and there is a common link of finite rate to all
decoders. Figure \ref{figuremodel} shows the source broadcasting network in consideration.%

\begin{figure}
[ptb]
\begin{center}
\includegraphics[
height=1.7841in,
width=3.007in
]%
{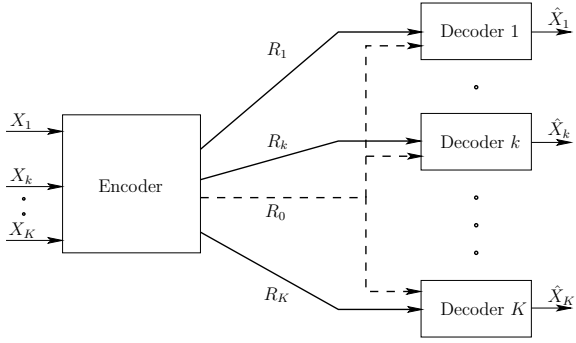}%
\caption{The generalized Gray-Wyner source network.}%
\label{figuremodel}%
\end{center}
\end{figure}

An $(n,\{M_{k}\}_{k=1}^{K},M_{0})$ code for this model is defined by $(K+1)$
encoding functions described as%
\begin{align}
f_{0} &  :\mathcal{X}_{1}^{n}\times\ldots\mathcal{X}_{K}^{n}\rightarrow
\{1,\ldots,M_{0}\},\\
f_{k} &  :\mathcal{X}_{1}^{n}\times\ldots\mathcal{X}_{K}^{n}\rightarrow
\{1,\ldots,M_{k}\},\quad k=1,\ldots,K,
\end{align}
and $K$ decoding functions,
\[
g_{k}:\{1,\ldots,M_{0}\}\times\{1,\ldots,M_{k}\}\rightarrow\mathcal{X}_{k}%
^{n},\quad k=1,\ldots,K.
\]
We define the probability of error at decoder $k$ as
\[
P_{e,k}=\mbox{Pr}(X_{k}^{n}\neq g_{k}(f_{0}(\overline{X}^{n}),f_{k}%
(\overline{X}^{n}))),
\]
where $\overline{X}^{n}\triangleq\{X_{k}^{n}\}_{k=1}^{K}$. We define the
equivocation at decoder $k$ as
\[
E_{k}=\frac{1}{n}H(\overline{X}^{n}\setminus X_{k}^{n}|f_{0}(\overline{X}%
^{n}),f_{k}(\overline{X}^{n})),
\]
and the total equivocation as $E=\sum_{k=1}^{K}E_{k}$.

\begin{remark}
Informally, $E_{k}$ captures the average uncertainty, and hence privacy
achievable, about the remaining $(K-1)$ unintended sources at decoder $k$. 
\end{remark}

An $(\{R_{k}\}_{k=1}^{K},R_{0},\Delta)$ rate-equivocation $(K+2)$-tuple is
achievable for the source network if there exists an $(n,\{M_{k}\}_{k=1}%
^{K},M_{0})$ code such that,
\begin{align}
M_{0}  &  \leq2^{nR_{0}},\\
M_{k}  &  \leq2^{nR_{k}},\quad k=1,\ldots,K\\
P_{e,k}  &  \leq\epsilon_{k},\quad k=1,\ldots,K\\
E  &  \geq\Delta-\epsilon.
\end{align}
We denote by $\mathcal{R}$ the region of all achievable $(\{R_{k}\}_{k=1}%
^{K},R_{0},\Delta)$ rate-equivocation $(K+2)$-tuples.

\section{\label{sec:results}Main Contributions}

\subsection{Rate-Equivocation Region}

We state our first result in the following theorem. The proof is presented in
the appendix.

\begin{theorem}
\label{Th1}The region $\mathcal{R}$ of achievable rates-equivocation $\left(
K+2\right)  $-tuples for the source network shown in Figure \ref{figuremodel}
is the union of all $(k+2)$-tuples $(\{R_{k}\}_{k=1}^{K},R_{0},\Delta)$ that
satisfy%
\begin{align}
R_{0} &  \geq I(X_{1},X_{2},\ldots,X_{K};W),\label{R0Bounds}\\
R_{k} &  \geq H(X_{k}|W),\text{ \ \ }k=1,2,\ldots,K,\label{Rkbounds}\\
\Delta &  \leq%
{\textstyle\sum\limits_{k=1}^{K}}
H\left(  \overline{X}|W,X_{k}\right)  \label{DeltaBounds}%
\end{align}
where the union is over all auxiliary random variables $W$ arbitrarily
correlated with $(X_{1},\,X_{2},\ldots,X_{K})$, and where $\overline{X}%
\equiv\left(  X_{1},X_{2},\ldots,X_{K}\right)  $. 
\end{theorem}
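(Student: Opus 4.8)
The plan is to establish Theorem~\ref{Th1} by the usual two-part argument: a random-coding achievability scheme of Gray--Wyner type augmented with an equivocation computation, and a single-letter converse built on a carefully chosen auxiliary sequence. Throughout I write $J_0 = f_0(\overline{X}^n)$ and $J_k = f_k(\overline{X}^n)$ for the transmitted indices and $Y_k^n \triangleq \overline{X}^n\setminus X_k^n$ for the sources unintended at decoder $k$.

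For achievability, fix a test channel $p(w\mid \overline{x})$ and generate a codebook of roughly $2^{nI(\overline{X};W)}$ independent $W^n$ sequences. The encoder finds an index $M_0$ whose codeword $W^n(M_0)$ is jointly typical with $\overline{X}^n$ and sends it on the common link, so $R_0 \to I(\overline{X};W)$. Conditioned on $W^n$, each source $X_k^n$ is conveyed by Slepian--Wolf binning at rate $H(X_k\mid W)$, giving $R_k \to H(X_k\mid W)$ and allowing decoder $k$ to recover first $W^n$ and then $X_k^n$. Since the pair $(J_0,J_k)$ is a deterministic function of $(W^n,X_k^n)$, conditioning on it leaves at least as much residual uncertainty as conditioning on $(W^n,X_k^n)$, so $nE_k = H(Y_k^n\mid J_0,J_k)\ge H(Y_k^n\mid W^n,X_k^n)$; a conditional-typicality estimate then yields $E_k \ge H(\overline{X}\mid W,X_k)-\epsilon_n$. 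Summing over $k$ shows every $\Delta \le \sum_k H(\overline{X}\mid W,X_k)$ is achievable.

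For the converse I would identify the auxiliary as $W_i = (J_0,\overline{X}^{i-1})$ and verify that it simultaneously serves all three bounds. The common-rate bound follows from $nR_0 \ge I(\overline{X}^n;J_0) = \sum_i I(\overline{X}_i;W_i)$ using the memorylessness of the source. For the private rates, Fano's inequality gives $H(X_k^n\mid J_0,J_k)\le n\epsilon_n$, whence $nR_k \ge I(X_k^n;J_k\mid J_0) \ge H(X_k^n\mid J_0) - n\epsilon_n \ge \sum_i H(X_{k,i}\mid W_i) - n\epsilon_n$, the last step because $\overline{X}^{i-1}$ contains $X_k^{i-1}$. For the equivocation, Fano again lets me write $H(Y_k^n\mid J_0,J_k)\le n\epsilon_n + H(Y_k^n\mid J_0,X_k^n)$ after dropping $J_k$, and single-letterizing the right-hand side against the same $W_i$ gives $nE_k \le \sum_i H(\overline{X}_i\mid W_i,X_{k,i}) + n\epsilon_n$. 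A standard time-sharing variable $Q$ and $W=(W_Q,Q)$ then collapse all three chains to the claimed single-letter region, and a Carath\'{e}odory argument bounds the cardinality of $W$.

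I expect the equivocation steps to be the crux. In the converse one must check that the \emph{same} auxiliary $W_i=(J_0,\overline{X}^{i-1})$ that yields the rate bounds also upper-bounds $\sum_k E_k$ by $\sum_k H(\overline{X}\mid W,X_k)$; the nesting of conditioning sets $(J_0,X_k^n,Y_k^{i-1}) \supseteq (J_0,\overline{X}^{i-1},X_{k,i})$ is what makes this go through, and it must be combined correctly with Fano to discard the private index $J_k$ without loss. On the achievability side the delicate point is the conditional-typicality estimate showing that, given the selected codeword $W^n$, the residual uncertainty in $Y_k^n$ is essentially $nH(\overline{X}\mid W,X_k)$; the rate bounds themselves are routine Gray--Wyner coding.
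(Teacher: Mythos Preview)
Your proposal is sound and in fact more detailed than the paper's own proof, which omits the converse entirely (deferring to ``minor modifications'' of the Gray--Wyner converse in \cite{GrayWyner}); your identification $W_i=(J_0,\overline{X}^{i-1})$ together with the single-letterization of $E_k$ via Fano and the nesting $(J_0,X_k^n,Y_k^{i-1})\supseteq(J_0,\overline{X}^{i-1},X_{k,i})$ is the standard route and goes through cleanly.

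Where you differ from the paper is the equivocation analysis in achievability. The paper never passes through $(W^n,X_k^n)$; it bounds the index entropies directly by their log-cardinalities:
\begin{align*}
nE_k &\ge H(\overline{X}^n\mid J_0,J_k)-H(X_k^n\mid J_0,J_k)\\
&\ge H(\overline{X}^n)-H(J_0)-H(J_k)-n\epsilon_{k,n}\\
&\ge nH(\overline{X})-nI(\overline{X};W)-nH(X_k\mid W)-n\epsilon_{k,n}\\
&= nH(\overline{X}\mid W,X_k)-n\epsilon_{k,n},
\end{align*}
using only Fano and $H(J_0)\le\log|\mathcal{J}_0|=nI(\overline{X};W)$, $H(J_k)\le\log|\mathcal{J}_k|=nH(X_k\mid W)$. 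This sidesteps your ``delicate point'' entirely: no conditional-typicality or output-statistics lemma is needed. Your route via $H(Y_k^n\mid W^n,X_k^n)$ can be made rigorous, but note that a raw conditional-typicality count on the number of $Y_k^n$ sequences jointly typical with $(W^n,X_k^n)$ yields only an \emph{upper} bound on that conditional entropy; to obtain the required \emph{lower} bound you would end up re-deriving $I(\overline{X}^n;W^n)\le H(W^n)\le nI(\overline{X};W)$ and $H(X_k^n\mid W^n)\le H(J_k)+n\epsilon_{k,n}$, which is exactly the paper's argument in disguise. The cardinality shortcut is therefore the cleaner choice.
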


\begin{remark}
The rate region $\mathcal{R}_{G-W}$ of the Gray-Wyner network without
additional equivocation constraints is the region of $\left(  K+1\right)  $
rate tuples that satisfy (\ref{R0Bounds}) and (\ref{Rkbounds}).
\end{remark}

\subsection{Common Information of $K$ Correlated Variables}

We now present two definitions for the common information of $K$ correlated
random variables.

\begin{definition}
\label{defn1} The common information of $K$ correlated random variables,
$C_{1}$, is the maximal value of $R_{0}$, such that $(\{R_{k}\}_{k=1}%
^{K},R_{0},\Delta_{\mbox{max}})\in\mathcal{R}$, where
\[
\Delta_{\mbox{max}}\triangleq\sum_{k=1}^{K}H(\overline{X}|X_{k}).
\]

\end{definition}

\begin{definition}
\label{defn2} The common information of $K$ correlated random variables,
$C_{2}$, is the maximal value of $R_{0}$, such that $(\{H(X_{k})-R_{0}%
\}_{k=1}^{K},R_{0})\in\mathcal{R}_{G-W}$.
\end{definition}

We next state our second result.

\begin{theorem}
\label{theorem2}$C_{1}$ and $C_{2}$ are related as follows:
\begin{equation}
C_{1}=C_{2}=\max\limits_{W-X_{k}-\bar{X}\backslash X_{k},k=1,2,\ldots
,K}I\left(  X_{1}X_{2}\ldots X_{K};W\right)  .
\end{equation}

\end{theorem}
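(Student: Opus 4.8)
The plan is to unfold each definition through the region characterization of Theorem \ref{Th1} and show that both are forced onto the same family of auxiliary variables --- those $W$ obeying the Markov chains $W-X_k-\overline{X}\setminus X_k$ for every $k$ --- on which the operative common rate is exactly $I(\overline{X};W)$. The whole argument turns on a single nonnegative quantity, $I(\overline{X}\setminus X_k;W\mid X_k)$, which simultaneously equals the equivocation deficit $H(\overline{X}\mid X_k)-H(\overline{X}\mid W,X_k)$ at decoder $k$ and the mutual-information gap $I(\overline{X};W)-I(X_k;W)$. As a structural lemma, since conditioning cannot increase entropy we have $\sum_k H(\overline{X}\mid W,X_k)\le\sum_k H(\overline{X}\mid X_k)=\Delta_{\max}$, with equality iff $\sum_k I(\overline{X}\setminus X_k;W\mid X_k)=0$; as each summand is nonnegative, equality holds iff $W-X_k-\overline{X}\setminus X_k$ for all $k$. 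Write $\mathcal{W}$ for this Markov family; thus $\Delta_{\max}$ is attainable through $W$ precisely when $W\in\mathcal{W}$.

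For $C_1$, Theorem \ref{Th1} gives $(\{R_k\},R_0,\Delta_{\max})\in\mathcal{R}$ iff some $W$ satisfies $R_0\ge I(\overline{X};W)$, $R_k\ge H(X_k\mid W)$, and $\Delta_{\max}\le\sum_k H(\overline{X}\mid W,X_k)$. By the lemma the last condition confines $W$ to $\mathcal{W}$, and on the boundary of the region the common rate equals $I(\overline{X};W)$. Maximizing the common rate over the equivocation-optimal operating points therefore yields $C_1=\max_{W\in\mathcal{W}}I(\overline{X};W)$.

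For $C_2$, membership $(\{H(X_k)-R_0\},R_0)\in\mathcal{R}_{G-W}$ requires a $W$ with $R_0\ge I(\overline{X};W)$ and $H(X_k)-R_0\ge H(X_k\mid W)$, the latter equivalent to $R_0\le I(X_k;W)$. Feasibility thus demands $I(\overline{X};W)\le R_0\le\min_k I(X_k;W)$; but data processing gives $I(X_k;W)\le I(\overline{X};W)$, with gap exactly $I(\overline{X}\setminus X_k;W\mid X_k)$, so this interval is nonempty only when $I(X_k;W)=I(\overline{X};W)$ for every $k$ --- again $W\in\mathcal{W}$ --- in which case $R_0=I(\overline{X};W)$. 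Maximizing gives $C_2=\max_{W\in\mathcal{W}}I(\overline{X};W)$, matching $C_1$.

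The main obstacle is the $C_1$ side, where it is the equivocation requirement, not the rate constraints, that caps the common rate: one must argue that insisting on $\Delta=\Delta_{\max}$ restricts $W$ to $\mathcal{W}$ so that the meaningful common rate is $I(\overline{X};W)$ rather than a vacuously unbounded $R_0$. A secondary technicality is attainment of the maximum, which a standard cardinality bound on the support of $W$ resolves by rendering the optimization over $\mathcal{W}$ compact.
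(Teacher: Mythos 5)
Your proposal is correct and follows essentially the same route as the paper: both reduce each definition, via Theorem \ref{Th1}, to the vanishing of the nonnegative quantities $I(\overline{X}\setminus X_k;W\mid X_k)$, which is exactly the Markov-chain family $W-X_k-\overline{X}\setminus X_k$ for all $k$, and then read off the common rate as $I(\overline{X};W)$. Your sandwich argument $I(\overline{X};W)\le R_0\le\min_k I(X_k;W)$ for $C_2$ and the remark on pinning down the ``meaningful'' $R_0$ for $C_1$ are slightly more explicit than the paper's treatment, but the substance is identical.
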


\begin{proof}
From Definition \ref{defn1}, the achievable equivocation $E$ must satisfy
\[
E\geq\Delta_{\mbox{max}}=\sum_{k=1}^{K}H(\overline{X}|X_{k})
\]
On the other hand, any achievable $(\{R_{k}\}_{k=1}^{K},R_{0},E)\in
\mathcal{R}$ also satisfies
\[
E\leq\sum_{k=1}^{K}H(\overline{X}|W,X_{k}).
\]
We therefore, have the following constraint:
\[
\sum_{k=1}^{K}H(\overline{X}|W,X_{k})\geq\sum_{k=1}^{K}H(\overline{X}|X_{k})
\]
which is equivalent to the following $K$ constraints:%
\begin{equation}
I(\overline{X}\setminus X_{k};W|X_{k})=0,\quad k=1,\ldots,K.\label{cons1}%
\end{equation}
Therefore, from Definition \ref{defn1}, $C_{1}$ is equal to the maximal
$R_{0}$ subject to (\ref{cons1}), which implies that%
\[
C_{1}=\max_{W-X_{k}-\overline{X}\setminus X_{k},k=1,\ldots,K}I(X_{1}%
,\ldots,X_{K};W).
\]

From Definition \ref{defn2}, $C_{2}$ is defined as the maximal $R_{0}$ such
that $R_{k}+R_{0}=H(X_{k})$, for $k=1,\ldots,K$, and $(\{R_{k}\}_{k=1}%
^{K},R_{0})\in\mathcal{R}_{G-W}$. We therefore have the following constraints
for $k=1,\ldots,K$:
\begin{align}
H(X_{k}) &  =R_{k}+R_{0}\\
&  \geq H(X_{k}|W)+I(X_{1},\ldots,X_{K};W).
\end{align}
These constraints are equivalent to
\[
I(\overline{X}\setminus X_{k};W|X_{k})=0,\quad k=1,\ldots,K.
\]
Therefore, $C_{2}$ can be written as follows:
\[
C_{2}=\max_{W-X_{k}-\overline{X}\setminus X_{k},k=1,\ldots,K}I(X_{1}%
,\ldots,X_{K};W).
\]

\end{proof}

\subsection{Common Information:\ Properties}

We will now develop some properties of common information of $K$ correlated
random variables defined in Theorem \ref{theorem2}.

\begin{proposition}
\label{prop1}The common information of $K$ random variables, $C\left(
X_{1},X_{2},\ldots,X_{K}\right)  $, is monotonically decreasing in $K$.
\end{proposition}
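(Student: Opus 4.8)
The plan is to prove the stronger pointwise statement that adjoining a single source cannot increase the common information, namely $C(X_1,\ldots,X_{K+1})\le C(X_1,\ldots,X_K)$; monotonicity in $K$ then follows by induction. Throughout I will work with the single-letter characterization of Theorem~\ref{theorem2}, so that $C(X_1,\ldots,X_K)$ is the maximum of $I(X_1\cdots X_K;W)$ over all $W$ obeying the Markov constraints $W-X_k-(\overline{X}\setminus X_k)$ for $k=1,\ldots,K$. The apparent difficulty is that enlarging the source list can only increase the objective, since $I(X_1\cdots X_{K+1};W)=I(X_1\cdots X_K;W)+I(X_{K+1};W\mid X_1\cdots X_K)\ge I(X_1\cdots X_K;W)$; the content of the proposition is therefore that the extra Markov constraint forces this additional term to vanish even as it shrinks the feasible set. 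Reconciling this tension is the step I expect to be the main obstacle.

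The first key step is to observe that under the constraints the objective collapses to a single pairwise mutual information. Fix any $W$ feasible for the $(K+1)$-variable problem. The constraint for $k=1$ reads $W-X_1-(X_2,\ldots,X_{K+1})$, that is $I(X_2\cdots X_{K+1};W\mid X_1)=0$. Expanding the objective by the chain rule gives $I(X_1\cdots X_{K+1};W)=I(X_1;W)+I(X_2\cdots X_{K+1};W\mid X_1)=I(X_1;W)$. Applying the same expansion to only the first $K$ sources, together with the bound $I(X_2\cdots X_K;W\mid X_1)\le I(X_2\cdots X_{K+1};W\mid X_1)=0$, yields $I(X_1\cdots X_K;W)=I(X_1;W)$ as well. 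Hence for every feasible $W$ the two objectives coincide: $I(X_1\cdots X_{K+1};W)=I(X_1\cdots X_K;W)$.

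The second step transfers feasibility. Because $\overline{X}\setminus X_k$ in the $K$-variable problem is a subset of $(X_1,\ldots,X_{K+1})\setminus X_k$, each $(K+1)$-variable constraint $I((X_1,\ldots,X_{K+1})\setminus X_k;W\mid X_k)=0$ implies the corresponding $K$-variable constraint $I(\overline{X}\setminus X_k;W\mid X_k)=0$ for $k=1,\ldots,K$, since conditional mutual information is nonincreasing when variables are dropped. Thus any $W$ feasible for the $(K+1)$-variable optimization is also feasible for the $K$-variable one, whence $I(X_1\cdots X_K;W)\le C(X_1,\ldots,X_K)$. Combining this with the collapse identity of the previous step gives $I(X_1\cdots X_{K+1};W)\le C(X_1,\ldots,X_K)$ for every feasible $W$, and taking the supremum over such $W$ produces $C(X_1,\ldots,X_{K+1})\le C(X_1,\ldots,X_K)$, which completes the induction and establishes the claim.
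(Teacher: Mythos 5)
Your proposal is correct and follows essentially the same route as the paper: both use the Markov constraint at index $1$ to show the objective is unchanged (the paper shows $\leq$, you show equality via the collapse to $I(X_1;W)$) when the extra source is dropped, and both observe that feasibility for the larger problem implies feasibility for the smaller one. The only cosmetic difference is that you phrase it as $K\to K+1$ with induction rather than $K-1\to K$ directly.
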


\begin{proof}
Consider an arbitrary $W$ satisfying the Markov chain relationship
\begin{equation}
W-X_{k}-\overline{X}\setminus X_{k},\quad k=1,\ldots,K.\label{constr}%
\end{equation}
First consider the following sequence of inequalities:
\begin{align}
&  I(X_{1},\ldots,X_{K-1},X_{K};W)\nonumber\\
&  =I(X_{1},\ldots,X_{K-1};W)+I(X_{K};W|X_{1},\ldots,X_{K-1})\\
&  \leq I(X_{1},\ldots,X_{K-1};W)+I(X_{2},\ldots,X_{K};W|X_{1})\\
&  =I(X_{1},\ldots,X_{K-1};W)\label{equalmain}%
\end{align}
where (\ref{equalmain}) follows from the Markov chain relationship
$W-X_{1}-(X_{2},\ldots,X_{K})$. Now consider the following sequence of
inequalities:
\begin{align}
&  C(X_{1},\ldots,X_{K})\nonumber\\
&  =\max_{W-X_{k}-\overline{X}\setminus X_{k},\quad k=1,\ldots,K}%
I(X_{1},\ldots,X_{K};W)\\
&  \leq\max_{W-X_{k}-\overline{X}\setminus X_{k},\quad k=1,\ldots,K}%
I(X_{1},\ldots,X_{K-1};W)\label{stepa}\\
&  \leq\max_{W-X_{k}-\overline{X}\setminus(X_{k},X_{K}),\quad k=1,\ldots
,(K-1)}I(X_{1},\ldots,X_{K-1};W)\label{stepb}\\
&  =C(X_{1},\ldots,X_{K-1})\label{stepc}%
\end{align}
where (\ref{stepa}) follows from (\ref{equalmain}) and (\ref{stepb}) follows
from the fact that the Markov chain relationship $W-X_{k}-\overline
{X}\setminus X_{k}$ implies the Markov chain relationship $W-X_{k}%
-\overline{X}\setminus(X_{k},X_{K})$. Since the random variable $X_{K}$ could
be chosen arbitrarily from the set $(X_{1},\ldots,X_{K})$, (\ref{stepc}) shows
that the common information is monotonically decreasing in $K$.
\end{proof}

\begin{proposition}
\label{prop2}$C\left(  X_{1},\text{ }X_{2},\text{ }\ldots,\text{ }%
X_{K}\right)  $ is upper bounded as
\begin{equation}
C\left(  X_{1},X_{2},\ldots,X_{K}\right)  \leq\min_{i\not =j,i,j=1,2,\ldots
,K}I\left(  X_{i};X_{j}\right)  .
\end{equation}

\end{proposition}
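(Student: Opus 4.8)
The plan is to work directly from the single-letter characterization established in Theorem \ref{theorem2}, namely
\[
C(X_1,\ldots,X_K)=\max_{W-X_k-\overline{X}\setminus X_k,\,k=1,\ldots,K} I(X_1,\ldots,X_K;W),
\]
and to show that the objective $I(\overline{X};W)$ is bounded by $I(X_i;X_j)$ for every admissible $W$ and every fixed pair $i\neq j$. Maximizing the bound over $W$ and then minimizing over pairs will then yield the claim.

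First I would fix an arbitrary pair $i\neq j$ and an arbitrary $W$ satisfying all $K$ Markov constraints. The two constraints I actually need are $W-X_i-\overline{X}\setminus X_i$ and $W-X_j-\overline{X}\setminus X_j$; note that each of these implies a pairwise chain, since $X_j$ is a component of $\overline{X}\setminus X_i$ and $X_i$ is a component of $\overline{X}\setminus X_j$, so in particular $W-X_j-X_i$ holds.

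The core computation has two steps. Expanding by the chain rule gives
\[
I(\overline{X};W)=I(X_i;W)+I(\overline{X}\setminus X_i;W\mid X_i),
\]
and the second term vanishes by the Markov chain $W-X_i-\overline{X}\setminus X_i$, so $I(\overline{X};W)=I(X_i;W)$. Next, enlarging the conditioning gives the bound
\[
I(X_i;W)\le I(X_i;W,X_j)=I(X_i;X_j)+I(X_i;W\mid X_j),
\]
and the last term vanishes by the (implied) Markov chain $W-X_j-X_i$. Chaining these yields $I(\overline{X};W)\le I(X_i;X_j)$.

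Finally, since this inequality holds for every admissible $W$, maximizing the left-hand side gives $C(X_1,\ldots,X_K)\le I(X_i;X_j)$, and since $i\neq j$ was arbitrary, minimizing over all pairs gives the proposition. Equivalently, one could first invoke Proposition \ref{prop1} to reduce to the two-variable quantity $C(X_i,X_j)$ and then run the same two-step argument, but the direct route avoids the reduction. I do not expect a genuine obstacle here; the only point requiring care is assigning the two Markov chains their correct roles---one to collapse $I(\overline{X};W)$ down to $I(X_i;W)$, and the other to cap $I(X_i;W)$ by $I(X_i;X_j)$---and recognizing that the needed pairwise chain $W-X_j-X_i$ is an immediate consequence of the full constraint $W-X_j-\overline{X}\setminus X_j$.
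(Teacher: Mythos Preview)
Your proposal is correct and follows essentially the same approach as the paper: expand $I(\overline{X};W)$ via the chain rule, use $W-X_i-\overline{X}\setminus X_i$ to reduce to $I(X_i;W)$, then enlarge to $I(X_i;W,X_j)$ and use $W-X_j-X_i$ to cap at $I(X_i;X_j)$, finally minimizing over pairs. Your explicit remark that the pairwise chain $W-X_j-X_i$ is a consequence of the full constraint $W-X_j-\overline{X}\setminus X_j$ is a nice clarification the paper leaves implicit.
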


\begin{proof}
We consider an arbitrary $W$ satisfying (\ref{constr}), and upper bound the
following mutual information:
\begin{align}
I(X_{1},\ldots,X_{K};W) &  =I(X_{i};W)+I(\overline{X}\setminus X_{i}%
;W|X_{i})\\
&  =I(X_{i};W)\label{steppa}\\
&  \leq I(X_{i};X_{j},W)\\
&  =I(X_{i};X_{j})+I(X_{i};W|X_{j})\\
&  =I(X_{i};X_{j})\label{steppb}%
\end{align}
where (\ref{steppa}) follows from the Markov chain condition $W-X_{i}%
-\overline{X}\setminus X_{i}$, and (\ref{steppb}) follows from the Markov
chain condition $W-X_{j}-X_{i}$. The choice of $(i,j)$ was arbitrary, and
therefore, the common information is upper bounded by the minimum of pairwise
mutual information among all pairs, i.e.,
\[
C(X_{1},\ldots,X_{K})\leq\min_{i\neq j}I(X_{i};X_{j}).
\]

\end{proof}

\section{\label{sec:compare}Comparison and Examples}

In \cite{WynerCI} Wyner defines the common information of two correlated
random variables $(X_{1},X_{2})$ as
\[
B(X_{1},X_{2})=\inf_{X_{1}\rightarrow W\rightarrow X_{2}}I(X_{1},X_{2};W).
\]
One interpretation of this common information can be obtained from the
Gray-Wyner source network. The common information $B(X_{1},X_{2})$ of two
random variables is given as the smallest value of $R_{0}$ such that
$(R_{1},R_{2},R_{0})\in\mathcal{R}_{G-W}$ and $R_{0}+R_{1}+R_{2}\leq
H(X_{1},X_{2})$. Recently, this notion of common information was generalized
to $K$ correlated random variables in \cite{ChenCI}. The common information,
$B(X_{1},\ldots,X_{K})$, of $K$ correlated random variables, as defined in
\cite{ChenCI}, is given by smallest value of $R_{0}$ such that $(\{R_{k}%
\}_{k=1}^{K},R_{0})\in\mathcal{R}_{G-W}$ and $R_{0}+\sum_{i=1}^{K}R_{k}\leq
H(X_{1},\ldots,X_{K})$. The common information $B(X_{1},\ldots,X_{K})$ is
given as
\[
B(X_{1},\ldots,X_{K})=\inf I(X_{1},\ldots,X_{K};W)
\]
where the infimum is over all distributions $p(w,x_{1},\ldots,x_{K})$ that
satisfy%
\begin{align}
\sum_{w\in\mathcal{W}}p(w,x_{1},\ldots,x_{K}) &  =p(x_{1},\ldots
,x_{K})\label{constrB1}\\
p(x_{1},\ldots,x_{K}|w) &  =\prod_{k=1}^{K}p(x_{k}|w).\label{constrB2}%
\end{align}
It was shown in \cite{ChenCI} that $B(X_{1},\ldots,X_{K})$ is monotonically
increasing in $K$. We believe that any intuitively satisfactory measure of
common information should satisfy the property that the common information
should decrease as the number of random variables increases. In Proposition
\ref{prop1}, we showed that our measure of common information indeed satisfies
this property.

We next prove a property of $B(X_{1},\ldots,X_{K})$ that helps us in comparing
it with our common information $C(X_{1},\ldots,X_{K})$.

\begin{proposition}
\label{prop3}$B\left(  X_{1},X_{2},\ldots,X_{K}\right)  $ is lower bounded as
follows:
\begin{equation}
\max_{i\not =j}I\left(  X_{i};X_{j}\right)  \leq B\left(  X_{1},X_{2}%
,\ldots,X_{K}\right)  .
\end{equation}

\end{proposition}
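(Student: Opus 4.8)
The plan is to show that for every auxiliary variable $W$ admissible in the definition of $B$ and for every pair $i \neq j$, the pairwise mutual information $I(X_i; X_j)$ is bounded above by $I(X_1,\ldots,X_K; W)$; taking the infimum over all such $W$ and then the maximum over pairs yields the claim. The structure closely parallels the proof of Proposition \ref{prop2}, except that here the operative structural property is the conditional independence constraint (\ref{constrB2}), namely $p(x_1,\ldots,x_K|w) = \prod_{k=1}^{K} p(x_k|w)$, rather than the Markov constraint (\ref{constr}). The single key observation is that (\ref{constrB2}) forces $I(X_i; X_j|W) = 0$ for every pair, since conditioning on $W$ renders all the sources mutually independent.

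First I would fix an arbitrary distribution $p(w,x_1,\ldots,x_K)$ satisfying (\ref{constrB1}) and (\ref{constrB2}), and an arbitrary pair $(i,j)$ with $i \neq j$. Expanding $I(X_i; X_j, W)$ by the chain rule and using the conditional independence just noted gives
\[
I(X_i; X_j, W) = I(X_i; W) + I(X_i; X_j \mid W) = I(X_i; W).
\]
Since $I(X_i; X_j) \leq I(X_i; X_j, W)$ (equivalently, by nonnegativity of $I(X_i; W \mid X_j)$), this yields $I(X_i; X_j) \leq I(X_i; W)$. Next I would observe, again via the chain rule together with nonnegativity of mutual information, that
\[
I(X_1,\ldots,X_K; W) = I(X_i; W) + I(\overline{X}\setminus X_i; W \mid X_i) \geq I(X_i; W).
\]

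Chaining the two inequalities gives $I(X_i; X_j) \leq I(X_1,\ldots,X_K; W)$ for every admissible $W$. Taking the infimum over all such $W$ then produces $I(X_i; X_j) \leq B(X_1,\ldots,X_K)$, and since the pair $(i,j)$ was arbitrary, maximizing over pairs completes the argument. I do not anticipate a genuine obstacle here; the only subtle point is correctly invoking (\ref{constrB2}) to annihilate $I(X_i; X_j \mid W)$, which is precisely the ingredient that distinguishes this lower bound on $B$ from the upper bound on $C$ established in Proposition \ref{prop2}.
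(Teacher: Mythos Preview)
Your proposal is correct and follows essentially the same approach as the paper: both fix an admissible $W$, use $I(X_1,\ldots,X_K;W)\geq I(X_i;W)$, and then invoke the Markov chain $X_i-W-X_j$ implied by (\ref{constrB2}) to conclude $I(X_i;W)\geq I(X_i;X_j)$. The only cosmetic difference is that the paper cites the data processing inequality directly for this last step, whereas you unpack it via the chain rule and the vanishing of $I(X_i;X_j\mid W)$.
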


\begin{proof}
To prove Proposition \ref{prop3}, consider an arbitrary $W$ satisfying the
constraints (\ref{constrB1})-(\ref{constrB2}) and the following sequence of
inequalities:
\begin{align}
I(X_{1},\ldots,X_{K};W) &  \geq I(X_{i};W)\\
&  \geq I(X_{i};X_{j})\label{bstep}%
\end{align}
where (\ref{bstep}) follows from the Markov chain relationship $X_{i}-W-X_{j}%
$, and from the data processing inequality. In arriving at (\ref{bstep}), the
choice of $(i,j)$ was arbitrary, and therefore we can maximize over all pairs
$(i,j)$ such that $i\neq j$ to get the best possible lower bound in this manner.
\end{proof}

Using Propositions \ref{prop2} and \ref{prop3}, we have the following:%
\begin{align}
C(X_{1},\ldots,X_{K})  &  \leq\min_{i\neq j}I(X_{i};X_{j})\nonumber\\
&  \leq\max_{i\neq j}I(X_{i};X_{j})\leq B(X_{1},\ldots,X_{K}).
\label{mainprop}%
\end{align}
We will now give two examples to illustrate the usefulness of our definition
$C(X_{1},\ldots,X_{K})$ over $B(X_{1},\ldots,X_{K})$.

\begin{example}
Consider $K=3$ random variables $(X_{1},X_{2},X_{3})$ such that $X_{1}%
\sim\mbox{Ber}(1/2)$, $X_{2}=X_{1}\oplus N$, where $N\sim\mbox{Ber}(\delta)$
and $X_{3}$ is independent of $(X_{1},X_{2})$. Since $X_{3}$ is independent of
$(X_{1},X_{2})$, these sources have nothing in common and we should expect the
`common information' to be zero. Note that for these sources, $\min_{i\neq
j}I(X_{i};X_{j})=0$, whereas $\max_{i\neq j}I(X_{i};X_{j})=1-h(\delta)$.
Therefore, from (\ref{mainprop}), we have
\[
0\leq C(X_{1},X_{2},X_{3})\leq0\leq1-h(\delta)\leq B(X_{1},X_{2},X_{3}),
\]
which implies that $C(X_{1},X_{2},X_{3})=0$, whereas $B(X_{1},X_{2},X_{3})>0$
for any $\delta\in(0,1/2)$.
\end{example}

\begin{example}
Consider $K=3$ random variables $(X_{1},X_{2},X_{3})$ such that $X_{1}%
=(X_{0},X_{1p})$, $X_{2}=(X_{0},X_{2p})$ and $X_{3}=(X_{0},X_{3p})$, where
$(X_{0},X_{1p},X_{2p},X_{3p})$ are all mutually independent. Since $X_{0}$
appears to be the only common part in all three sources, we should expect the
`common information' to be equal to the entropy of $X_{0}$. Note that for
these sources, $\min_{i\neq j}I(X_{i};X_{j})=\max_{i\neq j}I(X_{i}%
;X_{j})=H(X_{0})$. Therefore, from (\ref{mainprop}), we have
\[
0\leq C(X_{1},X_{2},X_{3})\leq H(X_{0})\leq B(X_{1},X_{2},X_{3}),
\]
It is straightforward to show that for these sources,
\[
C(X_{1},X_{2},X_{3})=B(X_{1},X_{2},X_{3})=H(X_{0}).
\]

\end{example}

Inspired by the above example, we show the following interesting property that
in some sense relates $C(X_{1},\ldots,X_{K})$ to $B(X_{1},\ldots,X_{K})$.

\begin{proposition}
\label{propo4}For a set of sources $X_{1},X_{2},\ldots,X_{K}$ that satisfy%
\begin{equation}
\min_{i\neq j}I(X_{i};X_{j})=\max_{i\not =j}I(X_{i};X_{j}),\label{equalcons}%
\end{equation}
we have%
\begin{equation}
C\left(  X_{1},X_{2},\ldots,X_{K}\right)  =\min_{i\neq j}I(X_{i};X_{j})
\end{equation}%
\begin{equation}%
\begin{array}
[c]{cc}%
\text{if } & B\left(  X_{1},X_{2},\ldots,X_{K}\right)  =\max_{i\not =j}%
I(X_{i};X_{j}).
\end{array}
\end{equation}

\begin{proof}
The constraint (\ref{equalcons}) implies that the mutual information
$I(X_{i};X_{j})$ is the \emph{same} for all $i,j\in\{1,\ldots,K\}$, $i\neq j$.
Let us start with a $W^{\ast}$ that satisfies the infimization constraints for
$B(X_{1},\ldots,X_{K})$ and yields
\begin{align}
B(X_{1},\ldots,X_{K}) &  =\max_{i\neq j}I(X_{i};X_{j})\\
&  =I(X_{i_{0}};X_{j_{0}}),
\end{align}
for some $i_{0}\neq j_{0}$. For this $W^{\ast}$, we have
\begin{align}
I(X_{i_{0}};X_{j_{0}}) &  =\max_{i\neq j}I(X_{i};X_{j})\\
&  =I(X_{1},\ldots,X_{K};W^{\ast})\\
&  =I(X_{i_{0}};W^{\ast})+I(\overline{X}\setminus X_{i_{0}};W^{\ast}|X_{i_{0}%
})\\
&  \geq I(X_{i_{0}};X_{j_{0}})+I(\overline{X}\setminus X_{i_{0}};W^{\ast
}|X_{i_{0}})\label{equu}%
\end{align}
where (\ref{equu}) follows from the fact that $W^{\ast}$ satisfies the Markov
relationship $X_{i_{0}}-W^{\ast}-X_{j_{0}}$, for all $i_{0}\neq j_{0}$. In the
derivation of (\ref{equu}), $i_{0}$ can be chosen arbitrarily due to
(\ref{equalcons}). Therefore, (\ref{equu}) implies that this $W^{\ast}$ also
satisfies
\[
I(\overline{X}\setminus X_{i};W^{\ast}|X_{i})=0
\]
for all $i=1,\ldots,K$. This in turn implies that $W^{\ast}$ serves as a valid
choice in the maximization for evaluation of $C(X_{1},\ldots,X_{K})$.
Therefore, we obtain the following lower bound for $C(X_{1},\ldots,X_{K})$:
\begin{align}
C(X_{1},\ldots,X_{K}) &  =\max_{W-X_{k}-\overline{X}\setminus X_{k}%
,k=1,\ldots,K}I(X_{1},\ldots,X_{K};W)\\
&  \geq I(X_{1},\ldots,X_{K};W^{\ast})\\
&  =\max_{i\neq j}I(X_{i};X_{j})\\
&  =\min_{i\neq j}I(X_{i};X_{j}).
\end{align}
Hence, from Proposition \ref{prop1}, it now follows that if $B(X_{1}%
,\ldots,X_{K})=\max_{i\neq j}I(X_{i};X_{j})$, then $C(X_{1},\ldots,X_{K}%
)=\min_{i\neq j}I(X_{i};X_{j})$. We remark here that a similar property has
been shown for $K=2$ by Ahlswede and K\"{o}rner in \cite{AhlswedeCI}.
\end{proof}
\end{proposition}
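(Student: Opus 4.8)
The plan is to squeeze $C$ between a matching upper and lower bound. Under the hypothesis (\ref{equalcons}) every pairwise mutual information collapses to one common value; write $I^{\ast}=\min_{i\neq j}I(X_{i};X_{j})=\max_{i\neq j}I(X_{i};X_{j})$. Proposition \ref{prop2} already delivers the upper bound $C(X_{1},\ldots,X_{K})\leq I^{\ast}$ for free, so the entire task reduces to proving the reverse inequality $C(X_{1},\ldots,X_{K})\geq I^{\ast}$ under the extra assumption $B(X_{1},\ldots,X_{K})=I^{\ast}$. My strategy for the lower bound is to take the auxiliary variable that is optimal for $B$ and show it is in fact a feasible choice in the maximization that defines $C$, so that its objective value $I^{\ast}$ lower-bounds $C$.

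Concretely, I would fix a $W^{\ast}$ attaining $B(X_{1},\ldots,X_{K})=I(\overline{X};W^{\ast})=I^{\ast}$; such an optimizer exists because the defining constraints (\ref{constrB1})--(\ref{constrB2}) admit the standard cardinality bound on $\mathcal{W}$, so the infimum is attained. The structural payoff of (\ref{constrB2}) is that, conditioned on $W^{\ast}$, the sources are mutually independent, and in particular $X_{i}-W^{\ast}-X_{j}$ holds for \emph{every} pair $i\neq j$. I would then run a pinching argument index by index: for any fixed $i$, data processing along $X_{i}-W^{\ast}-X_{j}$ gives $I(X_{i};W^{\ast})\geq I(X_{i};X_{j})=I^{\ast}$, while the chain rule expands $I^{\ast}=I(\overline{X};W^{\ast})=I(X_{i};W^{\ast})+I(\overline{X}\setminus X_{i};W^{\ast}\mid X_{i})\geq I^{\ast}+I(\overline{X}\setminus X_{i};W^{\ast}\mid X_{i})$. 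Nonnegativity then forces $I(\overline{X}\setminus X_{i};W^{\ast}\mid X_{i})=0$.

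Because (\ref{equalcons}) makes the bound $I(X_{i};W^{\ast})\geq I^{\ast}$ valid with the \emph{same} constant for every index, this vanishing holds for all $i=1,\ldots,K$, which is exactly the family of Markov chains $W^{\ast}-X_{i}-\overline{X}\setminus X_{i}$ that cut out the feasible set of $C$. Hence $W^{\ast}$ is admissible in the maximization defining $C$, so $C(X_{1},\ldots,X_{K})\geq I(\overline{X};W^{\ast})=I^{\ast}$; combined with the upper bound this yields $C(X_{1},\ldots,X_{K})=I^{\ast}=\min_{i\neq j}I(X_{i};X_{j})$.

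The delicate point — and the place where the hypotheses genuinely do the work — is establishing that the conditional terms $I(\overline{X}\setminus X_{i};W^{\ast}\mid X_{i})$ vanish \emph{uniformly} in $i$, not just for the single pair that realizes the maximum. Both assumptions feed into this: the equality of all pairwise mutual informations is what permits the lower bound $I(X_{i};W^{\ast})\geq I^{\ast}$ for every $i$, and the assumption $B=\max_{i\neq j}I(X_{i};X_{j})$ is what pins the total $I(\overline{X};W^{\ast})$ to that same constant, so that the nonnegative remainders are pinched to zero simultaneously. Without the equality of all pairwise mutual informations the squeeze would constrain only one index, $W^{\ast}$ would not be guaranteed feasible for $C$, and the argument would break; I expect this feasibility-transfer, rather than any lengthy computation, to be the crux.
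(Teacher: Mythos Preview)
Your proposal is correct and follows essentially the same approach as the paper: take the $B$-optimizer $W^{\ast}$, use the pairwise Markov chains $X_{i}-W^{\ast}-X_{j}$ from (\ref{constrB2}) together with the chain rule and the hypothesis $I(\overline{X};W^{\ast})=I^{\ast}$ to pinch $I(\overline{X}\setminus X_{i};W^{\ast}\mid X_{i})$ to zero for every $i$, conclude that $W^{\ast}$ is feasible for $C$, and combine with the upper bound. Your write-up is in fact slightly cleaner than the paper's in two places: you explicitly justify existence of the optimizer via a cardinality bound, and you correctly invoke Proposition~\ref{prop2} for the upper bound (the paper's reference to Proposition~\ref{prop1} at that step appears to be a slip).
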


\section{\label{sec:conclusion}Concluding Remarks}

We have abstracted the problem of privacy in a setting where a source
interacts with multiple users via the Gray-Wyner source coding problem with
additional equivocation constraints at each user and a total equivocation
constraint. In addition to developing the rate-equivocation region, we have
introduced two definitions of common information of $K$ correlated variables
and shown them both to have a form that can be viewed as a $K$-user
generalization of the G\'{a}cs-K\"{o}rner common information (see also
\cite{AhlswedeCI}).

\section{Appendix: Proof of Theorem \ref{Th1}}

The converse follows by minor modifications of the converse proof for the
unconstrained Gray-Wyner problem \cite{GrayWyner} and is therefore omitted. We
now outline the proof of achievability for Theorem \ref{Th1}.

Codebook generation: Fix an input distribution $p(w|x_{1},\ldots,x_{K})$.
Generate $2^{nI(X_{1},\ldots,X_{K};W)}$ sequences according to the
distribution $\prod_{t=1}^{n}p(w_{t})$, and index these sequences as
$w^{n}(i)$, for $i=1,\ldots, 2^{nI(X_{1},\ldots, X_{K};W)}$. Independently and
uniformly bin the $X_{k}^{n}$-sequences in $2^{nH(X_{k}|W)}$ bins, and index
these bins as $b_{k,1},\ldots,b_{k,2^{nH(X_{k}|W)}}$, for $k=1,\ldots,K$.

Encoding scheme: Upon observing the $(x_{1}^{n},\ldots,x_{K}^{n})$ sequences,
the encoder searches for a $w^{n}$ sequence that is jointly typical with these
sequences. Using standard arguments (as in \cite{Cover:book}), it can be shown
that the encoder can succeed in finding one such $w^{n}$ sequence. The encoder
sends the index of the $w^{n}$ sequence on the public link, for which we
require $R_{0}\geq I(X_{1},\ldots,X_{K};W)$. It sends the bin index of the
source sequence $x_{k}^{n}$ on the private link to decoder $k$, for which we
require $R_{k}\geq H(X_{k}|W)$.

Decoding: At decoder $k$, the decoder looks for a unique $x^{n}$ in bin
$b_{k}$ (received from the private link), that is jointly typical with the
$w^{n}$ sequence received from the public link. It can be shown that decoder
$k$ can reconstruct $X_{k}^{n}$ with a vanishingly small probability of error.
We omit the probability of error calculation as it follows from the same
arguments as in \cite{GrayWyner}.

Equivocation: We show that this coding scheme yields the total equivocation
stated in Theorem \ref{Th1}. Let $J_{0}$ denote the encoder output for the
public link and let $J_{k}$ denote the encoder output for the private link to
decoder $k$, for $k=1,\ldots,K$. For $E_{k}$, we have the following sequence
of inequalities:
\begin{align}
E_{k} &  =\frac{1}{n}H(X_{1}^{n},\ldots,X_{k-1}^{n},X_{k+1}^{n},\ldots
,X_{K}^{n}|J_{0},J_{k})\\
&  =\frac{1}{n}H(\overline{X}^{n}\setminus X_{k}^{n}|J_{0},J_{k})\\
&  \geq\frac{1}{n}H(\overline{X}^{n}|J_{0},J_{k})-\frac{1}{n}H(X_{k}^{n}%
|J_{0},J_{k})\\
&  \geq\frac{1}{n}H(\overline{X}^{n}|J_{0},J_{k})-\epsilon_{k,n}%
\label{reconss}\\
&  =\frac{1}{n}H(\overline{X}^{n},J_{0},J_{k})-\frac{1}{n}H(J_{0}%
,J_{k})-\epsilon_{k,n}\\
&  \geq\frac{1}{n}H(\overline{X}^{n})-\frac{1}{n}H(J_{0},J_{k})-\epsilon
_{k,n}\\
&  \geq\frac{1}{n}H(\overline{X}^{n})-\frac{1}{n}H(J_{0})-\frac{1}{n}%
H(J_{k})-\epsilon_{k,n}\\
&  \geq H(X_{1},\ldots,X_{K})-I(X_{1},\ldots,X_{K};W)-H(X_{k}|W)\nonumber\\
&  \quad-\epsilon_{k,n}\label{eqnequivo}\\
&  =H(X_{1},\ldots,X_{K}|W,X_{k})-\epsilon_{k,n}\\
&  =H(\overline{X}|W,X_{k})-\epsilon_{k,n},
\end{align}
where (\ref{reconss}) follows from Fano's inequality, and (\ref{eqnequivo})
follows from the facts that $H(J_{0})\leq\log(|\mathcal{J}_{0}|)=nI(X_{1}%
,\ldots,X_{K};W)$, and $H(J_{k})\leq\log(|\mathcal{J}_{k}|)=nH(X_{K}|W)$, for
$k=1,\ldots,K$. Therefore, we have that
\[
E=\sum_{k=1}^{K}E_{k}\geq\sum_{k=1}^{K}H(\overline{X}|W,X_{k})-\epsilon.
\]
Hence, this coding scheme yields an equivocation of $\Delta=\sum_{k=1}%
^{K}H(\overline{X}|W,X_{k})$.

\bibliographystyle{IEEEtran}
\bibliography{refravi}

\end{document}